\begin{document}

\preprint{APS/123-QED}

\newcommand{\nd}{\noindent}
\newtheorem{theo}{Theorem}[section]
\newtheorem{definition}[theo]{Definition}
\newtheorem{lem}[theo]{Lemma}
\newtheorem{prop}[theo]{Proposition}
\newtheorem{coro}[theo]{Corollary}
\newtheorem{exam}[theo]{Example}
\newtheorem{rema}[theo]{Remark}
\newtheorem{example}[theo]{Example}
\newtheorem{principle}[theo]{Principle}
\newcommand{\ninv}{\mathord{\sim}} 
\newtheorem{axiom}[theo]{Axiom}

\title{Convex politopes and quantum separability}

\author{F. Holik}
\affiliation{Departamento de Matem\'{a}tica - Ciclo B\'{a}sico Com\'{u}n\\
Universidad de Buenos Aires - Pabell\'{o}n III, Ciudad
Universitaria \\ Buenos Aires, Argentina\\
}%
\affiliation{
 Postdoctoral Fellow of CONICET\\
}%
\author{A. Plastino}
\affiliation{%
 National University La Plata
\& CONICET IFLP-CCT, C.C. 727 - 1900 La Plata, Argentina
}%

\date{\today}

\begin{abstract}
\noindent We advance a novel perspective of the entanglement issue
that appeals to  the Schlienz-Mahler measure [Phys. Rev. A
\textbf{52}, 4396 (1995)]. Related to it, we propose a criterium
based on the consideration of convex subsets of quantum states. This
criterium generalizes a property of product states to convex subsets
(of the set of quantum-states) that is able to uncover a new
geometrical property of the separability property.
\begin{description}
\item[PACS numbers]
\textbf{03.65.Ud}
\end{description}
\end{abstract}

\pacs{Valid PACS appear here}
\keywords{entanglement-quantum separability-convex
sets}
\maketitle

\bibliography{pom}

\section{Introduction}

\nd   Schr\"{o}dinger stated, as everyone knows, that ``entanglement
is {\bf the} characteristic trait of quantum
mechanics"\cite{Schro-35,Schro-36,EPR}. Many years afterwards,
entanglement, although still rather a puzzling issue, is a subject
of immense attention, mostly because interest on its
characterization has more than foundational significance, it being a
powerful resource for quantum information processing that offers a
host of possible technological applications
\cite{QuantumKeyDistribution}. A suggestive assertion
\cite{Horodeki-2009,horodeki2007review} seemingly deserves
repetition: ``The fundamental question in quantum entanglement
theory is \emph{which states are entangled and which are not}".

\subsection{Abstract mathematical notions and entanglement}
\nd The geometric properties of entanglement are of paramount
importance (see\cite{Horodeki-2009}). In order to characterize it,
many mathematical strategies have been followed, that  range from
the application of algebraic tools, to group theory, differential
geometry, convex geometry, numerical simulations, etc. (see
\cite{Horodeki-2009,horodeki2007review,PlenioVirmani-2006}). Without
any doubt, the discovery of new mathematical structures underlying
the theoretical description of entanglement has provided insightful
answers to the problems of its characterization, manipulation and
quantification, as remarked in \cite{PlenioVirmani-2006}. Underlying
many of these approaches, one encounters once and again geometrical
properties of the quantum set of states and, in particular, those of
the set of separable states \cite{Werner}. For examples of
geometrical applications to the study of entanglement see, for
instance,
\cite{Kus-Zyczkowski,Geom-Leinaasetal,ClassicalTensorsI,ClassicalTensorsII,GeometryofGrabowski,Geometry2qbits,StellaOctangulla,Geometry-Hugston,EntanglementIllustrated,GometricAspectsofEntanglement}
and also \cite{zyczkowski1998} for an excellent overview.

\nd Since characterizing the geometry of entanglement is indeed a
fundamental task for physicists, {\it we propose here to appeal to a
very powerful abstract concept for guiding entanglement-research},
namely, {\it the convex set of quantum states (CSQS)}, which
exhibits fascinating geometrical properties \cite{zyczkowski1998}.
The CSQS  not only deserves  mathematical interest, but also sheds
light on the abstract and counterintuitive properties of
entanglement,  the difference between entangled and separable states
being a conspicuous example \cite{Werner}. In a different vein,
information needed to reformulate quantum mechanics is fully
contained in the geometrical properties of the quantum set of states
\cite{MielnikGQM,MielnikGQS,MielnikTF}. Summing up: geometrical
knowledge about these properties underlies most of the current
research-lines on entanglement and opens the door to the possibility
of exploring non-linear generalizations of quantum mechanics. See
also
\cite{SchillingAshtekar-1995,KibbleGeometrization,Scwinger,GeometryofCoherentStates,Volume-ZyczcowskiI,Volume-ZyczcowskiII,Geometry-Prugovecki}
for more examples of geometrical applications. It seems odd to
regard any piece of mathematics as too abstract for
entanglement-physicists.

\subsection{Our goal} \nd This work pretends to exhibit {\it unexplored} geometrical properties of separable
states and also present a novel separability criterium (SC) closely
linked to the Schlienz-Mahler (SM) entanglement measure
\cite{Ent95}. Our SC is formulated in geometrical-convexity terms
and is easily exportable to more general environments via the
so-called convex operational approach to physics.

\nd Now, the SM measure alluded to above constitutes a really
significant development, being used as a basis not only for
developing new ideas but also to establish separability criteria
(see for example \cite{altafini}, \cite{Bjork1}, \cite{Bjork2},
\cite{Bjork3}, and \cite{zhang}). Their authors (SM) focus
attention {\it on the difference between a given density matrix
and the product of its reduced states} $\rho^{A}\otimes\rho^{B}$.
We will use a suitable generalization of this difference in order
to establish a link between the convex sets of the compound system
and its subsystems, thereby developing  a new entanglement
criterium based on the convex structure of the set of quantum
states. A similar derivation can be made by recourse to a quantum
logical approach \cite{holik}. Our admittedly abstract criterium
can still shed some light on the geometrical properties of
separable states.

\vskip 3mm \nd In working with the convex structure of the quantum
set of states we will regard convex subsets of it as probability
spaces and take advantage of the fact that some of these subsets
can be fully recovered from the information contained on the
available states of the associated subsystems. Such is our leit
motif. Further, we will advance the notion of informational
invariance and deal with \emph{convex invariant subsets}. Our
proposal is based on the property that for every separable state
there exists a convex subset which contains it and is an
informational invariant. From such basic idea, our
entanglement-edifice will be built up. It is endowed with the
strength of possibly allowing one to study and classify
entanglement in higher dimensions, and even to multipartite
systems just because of its abstract nature.

\nd Matters are organized as follows. After some preliminaries
(which, though not essential for the rest of the article, may
serve as a conceptual and mathematical guide) in Sec. II, we
review in section \ref{s:convex set of states} some ideas of
\cite{Ent95} together with their consequences. In section
\ref{s:New Criteria} we show how to construct special functions
that allow us to develop a new separability criterium. In section
\ref{s:Implications} we discuss implications of this criterium and
indicate how the functions so developed can be used to generalize
product states to convex sets. In section VI
we condense some of our results in a more conceptual fashion and,
 finally, draw some conclusions.

\section{Preliminaries}\label{s:preliminaries}

\nd The mathematically savvy reader should skip this Section. Given
a composite-system formed of subsystems $A$ and $B$, a fundamental
characteristic of a product state, i.e., a state of the form

\begin{equation}
\rho_{Prod}=\rho^{A}\otimes\rho^{B}
\end{equation}

\noindent is that information of the whole state may be
reconstructed  from the simple sum of the information on the
states of the subsystems. The ``simple sum" is mathematically
represented by taking tensor products on the reduced states of the
subsystems. Thus the above statement may be expressed in
mathematical terms: taking partial traces and making tensor
products leave the state unchanged. But not every separable state
has this property; in general, a separable state will be of a
non-product kind, and the above informational relationship is no
longer true. No entangled state has this property. Thus, only
product states are \emph{invariant} in this sense. Product states
are fully recovered from the information contained in the states
of the subsystems (to be abbreviated as the ``reobtained''
property). We may call this property the \emph{informational
invariance}.

\nd We may also ask, and this is an unconventional viewpoint, for
the subsets of the convex set of states that exhibit the
recoverable property. An important example is the whole set of
separable states itself. It has -by definition- the property of
being fully recoverable by making tensor products of the complete
set of states of the subsystems and closing them by mixing
operation \cite{Werner}. In this sense we recover the
informational invariance property referred to above. Given the set
of available states of two systems, a physical operation is that
of  taking tensor products and then mix the pertinent states.
States obtained using these operations (together with local
unitary evolutions and classical communication) are classically
reproducible \cite{Werner}. \emph{In this work  we give a precise
mathematical formulation for set-notions of the kind exemplified
above, as well as a geometrical characterization of them.} The
ensuing mathematical notions will reveal  novel geometrical
structures which, in turn, make room  for a better
characterization of quantum states. \vskip 3mm

\nd We will denote sets of states with the informational invariance
property as \emph{convex separable subsets} (CSS) and will show that
for every separable state there exists a convex subset which
contains it and is an informational invariant (strictly included in
the convex set of separable states). Such indeed is the basis of our
abstract separability criterium, to be advanced below. Another
important feature of our abstract construction is the attainment of
a  purely geometrical description based on the convex structure of
the quantum set of states. The associated geometric reformulation of
entanglement may be useful for generalizing it to more general
scenarios, based on convex sets
\cite{Barnum-Wilce-2006,Barnum-Wilce-2009,Barnum-Wilce-2010}.

\subsection{Basic math-definitions}

\nd We remind the reader that every subset $A$ of a vector space is
contained within a smallest convex set called the convex hull of
$A$, namely the intersection of all convex sets containing $A$.
Thus, it is possible to define a convex-hull map $Conv()$ which has
three characteristic properties: i) extensivity  $A \subseteq
Conv(A)$, ii)   non-decreasing nature $A \subseteq B$ implies that
$Conv(A) \subseteq Conv(B)$, and iii) idempotency  $Conv(Conv(A)) =
Conv(A)$. Also, an extremal point of a convex set $S$ in a real
vector space is a point in $S$ which does not lie in any open line
segment joining two points of $S$ (an extremal point would be  a
``corner" of $S$). An important example for quantum mechanics is
that of pure states: they are the extreme points of the CSQS (more
on this below). \vskip 2mm

\nd A convex polytope may be defined as the convex hull of a finite
set of points (which are always bounded), or as a bounded
intersection of a finite set of half-spaces. One often asserts that
the term ``polytope'' is i) the general vocable of the sequence
``point, line segment, polygon, polyhedron, ...," or ii) to be
regarded  as a finite region of an $n-$dimensional space enclosed by
a finite number of hyperplanes. A $d-$dimensional polytope may be
specified as the set of solutions to a system of linear inequalities

\begin{equation}
M \bf{x} \le \bf{b},
\end{equation}

\noindent where $M$ is a real $s\times d$ matrix, and ${\bf b}$ is a
real $s-$vector. \vskip 3mm \nd For quantum systems,
$\mathcal{P}(\mathcal{H})$ will denote the set of all closed
subspaces of the pertinent Hilbert space $\mathcal{H}$, which are in
a one to one correspondence with the projection operators. Because
of this one to one link, one usually employs the notions of ``closed
subspace'' and ``projector'' in interchangeable fashion. An
important construct is $\mathcal{A}$, the set of bounded Hermitian
operators on $\mathcal{H}$, while the bounded operators on
$\mathcal{H}$ will be denoted by $\mathcal{B}(\mathcal{H})$. Pure
quantum states may be put in correspondence with the projective
space $\mathbf{C}\mathbf{P}(\mathcal{H})$ of a complex Hilbert space
$\mathcal{H}$, which is the set of equivalence classes of vectors
$v$ in $\mathcal{H}$, with $v \ne 0$, for the relation given by $v
\sim w$ when $v = \lambda w$ with $\lambda$ a non-zero scalar. Here
the equivalence classes for $\sim$ are also called projective rays.
A trace class operator is a compact one for which a  finite trace
may be defined (independently of the choice of basis).

\vskip 3mm \nd We will appeal below to the set $\mathcal{C}$
containing  all positive, hermitian, and trace-class (normalized to
unity) operators in $\mathcal{B}(\mathcal{H})$. \emph{A larger and
important structure used below, is the one denoted by
$\mathcal{L}_{\mathcal{C}}$, the set of all convex subsets of
$\mathcal{C}$.} This structure is  endowed with a lattice structure.
Finally, the reader may wish to recall in the Appendix some
elementary set-theory concepts used in the text. It is important to
remark that we will restrict to the finite dimensional case in the
rest of this work.

\begin{figure*}
\begin{center}
\includegraphics[width=8cm]{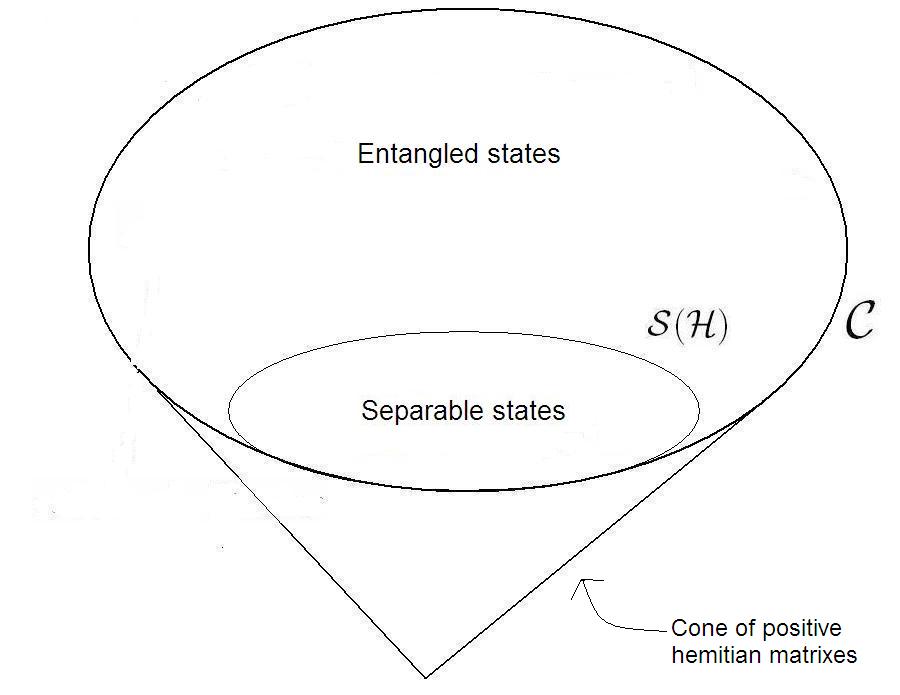}
\caption{\label{Graficoconvexo}\small{Geometric representation of
the convex set of states.}}
\end{center}
\end{figure*}



\section{The Schlienz-Mahler entanglement measure}\label{s:convex set of states}

\nd   For two quantum  systems $S_{1}$ and $S_{2}$, if
$\{|\varphi_{i}^{(1)}\rangle\}$ $-$ $\{|\varphi_{i}^{(2)}\rangle\}$
are the corresponding orthonormal basis of $\mathcal{H}_{1}$ $-$
$\mathcal{H}_{2}$, respectively, then the set
$\{|\varphi_{i}^{(1)}\rangle\otimes|\varphi_{j}^{(2)}\rangle\}$
constitutes an orthonormal basis for
$\mathcal{H}_{1}\otimes\mathcal{H}_{2}$. A general (pure) state of
the composite  $S_{1}-S_{2}$ system can be written as:

\begin{equation}
\rho=|\psi\rangle\langle\psi|.
\end{equation}

\noindent with $|\psi\rangle$ any vector in
$\mathcal{H}_{1}\otimes\mathcal{H}_{2}$. In the finite dimensional
case mixtures are represented by positive, Hermitian and trace one
operators (also called `density matrices'). The set of all density
matrixes forms a convex set (of states), which was called
$\mathcal{C}$ above, while the physical observables are represented
by elements of $\mathcal{A}$, the vector space of Hermitian
operators acting on $\mathcal{H}$. Formally we deal with the sets

\begin{definition}\label{d:hermitian}
$\mathcal{A}:=\{ A\in B(\mathcal{H})\,|\, A=A^{\dagger}\}$
\end{definition}

\begin{definition}\label{d:mathcalC}
$\mathcal{C}:=\{\rho\in\mathcal{A}\,|\,\mbox{tr}(\rho)=1,\,\rho\geq
0\},$
\end{definition}
\noindent where $B(\mathcal{H})$ stands for the algebra of bounded
operators in $\mathcal{H}$. $\mathcal{C}$ is a convex set inside the
hyperplane $\{\rho\in\mathcal{A}\,|\,\mbox{tr}(\rho)=1\}$ formed by
the intersection of this hyperplane with the cone of positive
matrices (see Figure \ref{Graficoconvexo}). Separable states are
defined \cite{Werner,zyczkowski1998} as those states of
$\mathcal{C}$ which can be written as a convex combination of
product states:

\begin{equation}
\rho_{Sep}=\sum_{i,j}\lambda_{ij}\rho_{i}^{(1)}\otimes\rho_{j}^{(2)},
\end{equation}
\noindent where $\rho_{i}^{(1)}\in\mathcal{C}_{1}$, and
$\rho_{j}^{(2)}\in\mathcal{C}_{2}$, $\sum_{i,j}\lambda_{ij}=1$ and
$\lambda_{ij}\geq 0$. We denote the set of separable states by
$\mathcal{S}(\mathcal{H})$.

\vskip 3mm \nd In set-parlance, the collective of  entangled states
becomes precisely  defined by

\begin{equation}
\mathcal{E}(\mathcal{H}):=\mathcal{C}\setminus\mathcal{S}(\mathcal{H}),
\end{equation}

\noindent where ``$\setminus$'' stands for set-theoretical
difference.

\vskip 3mm \nd As the dimension of the Hilbert space grows, most of
the states in $\mathcal{C}$ are non separable \cite{aubrum2006}. The
estimation of the volume of $\mathcal{S}(\mathcal{H})$ is of great
interest (see --among others--\cite{Volume-ZyczcowskiI},
\cite{aubrum2006} and \cite{horodecki2001}). The entanglement
measure advanced in \cite{Ent95} is based on the Fano decomposition
\cite{Fano83} (see also \cite{zyczkowski1998}, page 349). For
$\rho\in\mathcal{C}$, if the dimension of the Hilbert space is $d$,
one expresses it in terms of $\{\sigma_{i}\}$, the $d^{2}-1$
generators of $SU(d)$ (the group of special unitary matrixes acting
on $\mathcal{H}$). For composite bipartite systems, if $d=NK$, then
we have the following decomposition (in terms of the basis
$SU(N)\otimes SU(K)$)
\begin{widetext}
\begin{equation}\label{e:fano}
\rho=\frac{1}{NK}(\textbf{1}_{NK}+\sum_{i=1}^{N^{2}-1}\tau_{i}^{A}\sigma_{i}
\otimes\textbf{1}_{K}
+\sum_{j=1}^{K^{2}-1}\tau_{j}^{B}\textbf{1}_{N}
\otimes\sigma_{j}+\sum_{i=1}^{N^{2}-1}\sum_{j=1}^{K^{2}-1}\beta_{ij}\sigma_{i}\otimes\sigma_{j}),
\end{equation}
\end{widetext}

\noindent where $\tau_{i}^{A}$ and $\tau_{j}^{B}$ are Bloch vectors
such that

\begin{equation}\label{e:Blochvectors}
\rho^{A}=\frac{1}{N}(\textbf{1}_{N}+
\sum_{i=1}^{N^{2}-1}\tau_{i}^{A}\sigma_{i}),
\end{equation}

\noindent with an analogous form for $\rho^{B}$. $\rho^{A}$ and
$\rho^{B}$ are the reduced density matrixes of subsystems $A$ and
$B$ respectively. Schlienz-Mahler (SM) note that the term
$\sum_{i=1}^{N^{2}-1}\sum_{j=1}^{K^{2}-1}\beta_{ij}\sigma_{i}\otimes\sigma_{j}$
is related to correlations and proceed to construct an entanglement
measure using it. SM define then the tensor

\begin{equation}
M_{ij}=\beta_{ij}-\tau_{i}^{A}\tau_{j}^{B},
\end{equation}
that will play a leading role in their considerations. They use
$\mbox{tr}(\mathrm{M}\mathrm{M}^{\dag})$ as a measure of
entanglement (up to normalization), and this measure conveys
essentially the same information as

\begin{equation}\label{e:schlienz-mahler}
\|\rho-\rho^{A}\otimes\rho^{B}\|_{\mathcal{H}\mathcal{S}}^{2},
\end{equation}
\noindent where $\|\cdots\|_{\mathcal{H}\mathcal{S}}$ is the Hilbert
Schmidt norm

\begin{equation} \label{normita}
\|A\|_{\mathcal{H}\mathcal{S}}^{2}=\mbox{tr}(AA^{\dagger}),
\end{equation}
\noindent for any $A\in \mathcal{B}(\mathcal{H})$. The measure
(\ref{e:schlienz-mahler})

\begin{itemize}

\item vanishes for any product state,

\item is positive elsewhere,

\item it is maximal for any pure state with
vanishing Bloch vectors $\tau_{i}^{A}$ and $\tau_{j}^{B}$ (Equation
(\ref{e:Blochvectors})), and \item it is invariant under local
unitary transformations.

\end{itemize}

\nd Such properties allow for the development of other entanglement
measures and entanglement criteria (see, for example, \cite{zhang}).
The distance induced by the trace norm between two states represents
how well two states can be distinguished via measurement
\cite{Wootters}. It can be shown \cite{Bjork3} that

\begin{equation}
\sum_{i,j=1}^{3}C^{2}(\hat{\sigma}^{A}_{i},\hat{\sigma}^{B}_{j})=4\mbox{tr}[(\rho-\rho^{A}\otimes\rho^{B})^{2}],
\end{equation}
\noindent with

\begin{eqnarray}
&C(\sigma^{A}_{i},\sigma^{B}_{j})=&\nonumber\\
&\langle\sigma^{A}_{i}\otimes\sigma^{B}_{j}\rangle-
\langle\sigma^{A}_{i}\otimes\mathbf{1}^{B}\rangle\langle\mathbf{1}^{A}\otimes\sigma^{B}_{j}\rangle&,
\end{eqnarray}
\noindent making (\ref{e:schlienz-mahler}) easy to implement because
it can be measured via single-rates and coincidence-rates. More
generally, functions of the form

\begin{equation}\label{e:family}
W(\rho)=\|F(\rho-\rho^{A}\otimes\rho^{B})\|,
\end{equation}

\noindent have been studied  in some detail (see for example
\cite{Bjork1},\cite{Bjork2}, \cite{Bjork3} and \cite{zhang}).
$\|\cdots\|$ denotes a norm on the space of density matrixes and
$F:\mathcal{C}\longrightarrow\mathcal{C}$ a useful function for the
study of entanglement. Thus, entanglement measures
(\ref{e:schlienz-mahler}) become special cases of (\ref{e:family}).
The conditions imposed on $F$ and $\|\cdots\|$ are such that $W$
satisfies a similar set of conditions than the ones imposed on the
SM measure listed above.

\nd In the following section we  show that entanglement measures of
the form (\ref{e:family}) are closely linked to a particular
separability criterium that generalizes the map which assigns
$\rho^{A}\otimes\rho^{B}$ to any composite density matrix $\rho$.

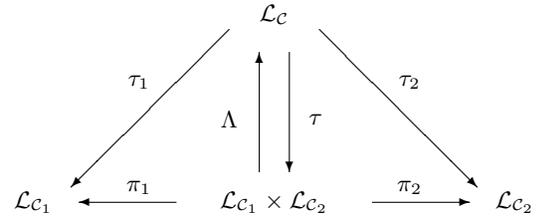
\begin{figure}
\begin{center}
\unitlength=1mm
\begin{picture}(5,5)(0,0)
\put(-6,23){\vector(-1,-1){21}} \put(6,23){\vector(1,-1){21}}
\put(-2,4){\vector(0,2){16}} \put(2,20){\vector(0,-2){16}}
\put(13,0){\vector(3,0){13}} \put(-13,0){\vector(-3,0){13}}

\put(0,25){\makebox(0,0){$\mathcal{L}_{\mathcal{C}}$}}
\put(-32,0){\makebox(0,0){$\mathcal{L}_{\mathcal{C}_{1}}$}}
\put(32,0){\makebox(0,0){$\mathcal{L}_{\mathcal{C}_{2}}$}}
\put(0,0){\makebox(0,0){$\mathcal{L}_{\mathcal{C}_{1}}\times\mathcal{L}_{\mathcal{C}_{2}}$}}
\put(-1,11){\makebox(-10,0){$\Lambda$}}
\put(-1,11){\makebox(13,0){$\tau$}}
\put(-18,16){\makebox(0,0){$\tau_{1}$}}
\put(18,16){\makebox(0,0){$\tau_{2}$}}
\put(-18,2){\makebox(0,0){$\pi_{1}$}}
\put(18,2){\makebox(0,0){$\pi_{2}$}}
\end{picture}
\caption{\label{f:figuremaps}The different maps between
$\mathcal{L}_{\mathcal{C}_{1}}$, $\mathcal{L}_{\mathcal{C}_{2}}$,
$\mathcal{L}_{\mathcal{C}_{1}}\times\mathcal{L}_{\mathcal{C}_{2}}$,
and $\mathcal{L}_{\mathcal{C}}$. $\pi_{1}$ and $\pi_{2}$ represent
canonical projections.}
\end{center}
\end{figure}

\section{A new separability criterium}\label{s:New Criteria}

\subsection{Preliminary matters}
\nd A glance at the Appendix might be useful at this stage. In the
previous section we saw how to construct a family of entanglement
measures via the mapping

\begin{definition}\label{e:assignment}
$$\Omega:\mathcal{C}\longrightarrow\mathcal{C}$$
$$\rho\mapsto \rho^{A}\otimes\rho^{B}.$$
\end{definition}

\nd Product states $\rho=\rho^{A}\otimes\rho^{B}$ satisfy

\begin{equation}\label{e:productproperty}
\Omega(\rho^{A}\otimes\rho^{B})=\rho^{A}\otimes\rho^{B},
\end{equation}

\noindent and they are the only states which satisfy
(\ref{e:productproperty}). Our leading idea now is that of {\it
generalizing the above considerations  to convex subsets of}
$\mathcal{C}$.

\subsubsection{First notion-generalization}

\nd In order to do so let us first study maps onto the set of states
of the subsystems, $\mathcal{C}_{1}$ and $\mathcal{C}_{2}$. We start
by defining special ``mappings'' using partial traces

\begin{eqnarray}
&\mbox{tr}_{i}:\mathcal{C}\longrightarrow \mathcal{C}_{j}&\nonumber\\
&\rho\mapsto \mbox{tr}_{i}(\rho)&,
\end{eqnarray}

\noindent from which we can construct the induced maps $\tau_i$ on
$\mathcal{L}_{\mathcal{C}}$, the set of all convex subsets of
$\mathcal{C}$ (a similar definition for
$\mathcal{L}_{\mathcal{C}_{i}}$, $i=1,2$), via the image of any
subset $C\subseteq\mathcal{C}$ under $\mbox{tr}_{i}$

\begin{eqnarray}
&\tau_{i}:\mathcal{L}_{\mathcal{C}}\longrightarrow
\mathcal{L}_{\mathcal{C}_{i}}&\nonumber\\
&C\mapsto \mbox{tr}_{j}(C)&,
\end{eqnarray}

\noindent where for $i=1$ we take the partial trace with $j=2$ and
vice versa. Thus, we can define the product map

\begin{eqnarray}
&\tau:\mathcal{L}_{\mathcal{C}}\longrightarrow\mathcal{L}_{\mathcal{C}_{1}}\times\mathcal{L}_{\mathcal{C}_{2}}&\nonumber\\
&C\mapsto(\tau_{1}(C),\tau_{2}(C))&
\end{eqnarray}

\noindent which generalizes partial traces to convex subsets of
$\mathcal{C}$.

\noindent In order to complete the desired generalization, let us
now define for convex subsets a new set-operation
$C_1\widetilde{\otimes} C_2$ that might be regarded as the analogous
of the tensor product (see Figure \ref{f:figuremaps}). We are thus,
loosely speaking, dealing with ``quasi-tensor set-compositions'' and
accordingly introduce the set of the definition that follows:

\begin{definition}\label{d:tensorconvex}
Given convex subsets $C_{1}\subseteq\mathcal{C}_{1}$ and
$C_{2}\subseteq\mathcal{C}_{2}$ we consider the set constructed
according to

\begin{equation}
C_1\widetilde{\otimes}
C_2:=\{\rho_{1}\otimes\rho_{2}\,|\,\rho_{1}\in C_1,\rho_{2}\in C_2\}
\end{equation}
\end{definition}

\noindent The symbol ``$\widetilde{\otimes}$" has a tilde in order
to avoid confusing it with the usual product of convex sets. Using
this, we define the map:

\begin{definition}\label{d:lambda}
$$\Lambda:\mathcal{L}_{\mathcal{C}_{1}}\times\mathcal{L}_{\mathcal{C}_{2}}\longrightarrow\mathcal{L}_{\mathcal{C}}$$
$$(C_{1},C_{2})\mapsto Conv(C_1\otimes C_2)$$
\end{definition}

\noindent where $Conv(\cdots)$ stands for \emph{convex hull} of a
given set. Applying $\Lambda$ to the particular case of the quantum
sets of states of the subsystems ($\mathcal{C}_{1}$ and
$\mathcal{C}_{2}$), one sees that Definitions \ref{d:tensorconvex}
and \ref{d:lambda} entail

\begin{equation}
\Lambda(\mathcal{C}_{1},\mathcal{C}_{2})=Conv(\mathcal{C}_{1}\widetilde{\otimes}\mathcal{C}_{2})
\end{equation}

\noindent and so, this is nothing but

\begin{equation}
\Lambda(\mathcal{C}_{1},\mathcal{C}_{2})=\mathcal{S}(\mathcal{H})
\end{equation}

\noindent because $\mathcal{S}(\mathcal{H})$ is by definition (for
finite dimension) the convex hull of the set of all product states
(which equals to
$\mathcal{C}_{1}\widetilde{\otimes}\mathcal{C}_{2}$). \emph{Thus,
the map $\Lambda$ gives a precise mathematical expression for the
operation of making tensor products and mixing mentioned in Section
\ref{s:preliminaries}}. Additionally, if
$\rho=\rho_{1}\otimes\rho_{2}$, with $\rho_{1}\in\mathcal{C}_{1}$
and $\rho_{2}\in\mathcal{C}_{2}$, then
$\{\rho\}=\Lambda(\{\rho_{1}\},\{\rho_{2}\})$, with
$\{\rho_1\}\in\mathcal{L}_{\mathcal{C}_{1}}$,
$\{\rho_2\}\in\mathcal{L}_{\mathcal{C}_{2}}$ and
$\{\rho\}\in\mathcal{L}_{\mathcal{C}}$. We can demonstrate as well
that

\begin{prop}
Let $\rho\in\mathcal{S(\mathcal{H})}$. Then, there exist
$C\in\mathcal{L}_{\mathcal{C}}$,
$C_{1}\in\mathcal{L}_{\mathcal{C}_{1}}$, and
$C_{2}\in\mathcal{L}_{\mathcal{C}_{2}}$ such that $\rho\in
C=\Lambda(C_{1},C_{2})$.
\end{prop}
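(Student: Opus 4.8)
The plan is to exhibit the required convex sets explicitly from a product-state decomposition of $\rho$. First I would record the cheap observation that, since $\mathcal{S}(\mathcal{H})=\Lambda(\mathcal{C}_{1},\mathcal{C}_{2})$ was established just above, the choice $C_{1}=\mathcal{C}_{1}$, $C_{2}=\mathcal{C}_{2}$, $C=\mathcal{S}(\mathcal{H})$ already satisfies the statement. So the bare assertion is trivial; the point of interest is that $C$ can be taken much smaller than $\mathcal{S}(\mathcal{H})$, and that is what the construction below delivers.

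To that end I would use the definition of separability: write $\rho=\sum_{i,j}\lambda_{ij}\,\rho_{i}^{(1)}\otimes\rho_{j}^{(2)}$ with $\rho_{i}^{(1)}\in\mathcal{C}_{1}$, $\rho_{j}^{(2)}\in\mathcal{C}_{2}$, $\lambda_{ij}\geq 0$, $\sum_{i,j}\lambda_{ij}=1$ (a finite sum, since we work in finite dimension). Set $C_{1}:=Conv(\{\rho_{i}^{(1)}\}_{i})$ and $C_{2}:=Conv(\{\rho_{j}^{(2)}\}_{j})$. These are convex hulls of finitely many elements of $\mathcal{C}_{1}$, resp. $\mathcal{C}_{2}$, hence convex polytopes contained in $\mathcal{C}_{1}$, resp. $\mathcal{C}_{2}$, so $C_{1}\in\mathcal{L}_{\mathcal{C}_{1}}$ and $C_{2}\in\mathcal{L}_{\mathcal{C}_{2}}$. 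Then define $C:=\Lambda(C_{1},C_{2})=Conv(C_{1}\widetilde{\otimes}C_{2})\in\mathcal{L}_{\mathcal{C}}$.

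It then remains only to check $\rho\in C$. By extensivity of $Conv(\cdot)$ we have $\rho_{i}^{(1)}\in C_{1}$ and $\rho_{j}^{(2)}\in C_{2}$ for all $i,j$, so each product $\rho_{i}^{(1)}\otimes\rho_{j}^{(2)}$ lies in $C_{1}\widetilde{\otimes}C_{2}$ by Definition \ref{d:tensorconvex}, and a fortiori in its convex hull $C$. Since $C$ is convex and $\rho=\sum_{i,j}\lambda_{ij}\,\rho_{i}^{(1)}\otimes\rho_{j}^{(2)}$ is a convex combination of these finitely many points of $C$, we get $\rho\in C=\Lambda(C_{1},C_{2})$, as claimed. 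As a by-product, when $\rho=\rho_{1}\otimes\rho_{2}$ is a product state the construction collapses to the singletons $C_{i}=\{\rho_{i}\}$ and reproduces the earlier remark $\{\rho\}=\Lambda(\{\rho_{1}\},\{\rho_{2}\})$.

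There is no genuinely hard step: the proof is immediate once the definitions of $\widetilde{\otimes}$, $\Lambda$ and $Conv(\cdot)$ are unwound, and the only care needed is the bookkeeping of checking that $C_{1},C_{2}$ really belong to $\mathcal{L}_{\mathcal{C}_{1}},\mathcal{L}_{\mathcal{C}_{2}}$ and that $\Lambda$ is being fed admissible arguments in the sense of Definition \ref{d:lambda}. If one instead wants the sharper claim flagged in the preliminaries, namely that $C$ can be chosen \emph{strictly} inside $\mathcal{S}(\mathcal{H})$, the extra work lies here: one must argue that a generic separable $\rho$ admits a decomposition whose components $\{\rho_{i}^{(1)}\},\{\rho_{j}^{(2)}\}$ are too few to regenerate all extreme points of $\mathcal{C}_{1},\mathcal{C}_{2}$, which is a dimension/cardinality argument (finitely many terms versus the continuum of pure states) rather than a difficulty in the present Proposition.
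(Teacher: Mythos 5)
Your construction is exactly the paper's: decompose $\rho$ as a finite convex combination of products, set $C_{1},C_{2}$ to be the convex hulls of the respective factors, and take $C=\Lambda(C_{1},C_{2})$, with $\rho\in C$ following because the products $\rho_{i}^{(1)}\otimes\rho_{j}^{(2)}$ lie in $C_{1}\widetilde{\otimes}C_{2}$ and $C$ is convex. The proposal is correct and only adds harmless extra remarks (the trivial choice $C=\mathcal{S}(\mathcal{H})$ and the product-state special case) plus a slightly more explicit justification of the final membership step than the paper's ``and then $\rho\in C$''.
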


\begin{proof}
If $\rho\in\mathcal{S(\mathcal{H})}$, then
$\rho=\sum_{ij}\lambda_{ij}\rho_{i}^{1}\otimes\rho_{j}^{2}$, with
$\sum_{ij}\lambda_{ij}=1$ and $\lambda_{ij}\geq 0$. Consider now the
convex sets

\begin{eqnarray}
C_{1}=Conv(\{\rho_{1}^{1},\rho_{2}^{1},\cdots,\rho_{k}^{1}\})\nonumber\\
C_{2}=Conv(\{\rho_{1}^{2},\rho_{2}^{2},\cdots,\rho_{l}^{2}\}).
\end{eqnarray}

\noindent We define:

\begin{equation}
C:=\Lambda(C_{1},C_{2})=Conv(C_{1}\otimes C_{2}).
\end{equation}

\noindent Clearly, the set
$\{\rho_{i}^{1}\otimes\rho_{j}^{2}\}\subseteq C_{1}\otimes C_{2}$,
and then $\rho\in C$.
\end{proof}

\subsubsection{Second notion-generalization}

\nd  The next notion to be tackled needs perhaps a perusal of
section II.A. We pass now to the  generalization  to convex subsets
of the map $\Omega$ in Definition \ref{e:assignment}. This is the
function $\Lambda\circ\tau$ (the composition of $\tau$ with
$\Lambda$). For the special case of a convex set formed by only one
``matrix'' (point) $\{\rho\}$ we have

\begin{equation}\label{e:lambdaenunrho}
\Lambda\circ\tau(\{\rho\})=\{\rho^{A}\otimes\rho^{B}\}
\end{equation}

\noindent which is completely equivalent to $\Omega$ and thus
satisfies (\ref{e:productproperty}). In what follows we will  need a
proposition taken from \cite{Convexsets}. It reads:

\begin{prop}
Let $S$ be a subset of a linear space $\mathcal{L}$. Then $x\in
Conv(S)$ iff $x$ is contained in a finite dimensional polytope
$\Delta$ whose extremal points belong to $S$,
\end{prop}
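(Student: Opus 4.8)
The plan is to prove the standard Carathéodory-type characterization: $x\in Conv(S)$ if and only if $x$ lies in a finite-dimensional polytope whose extremal points belong to $S$. Both implications are essentially unwrapping definitions, so the proof will be short.

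For the "if" direction, suppose $x\in\Delta$ where $\Delta$ is a finite-dimensional polytope with extremal points $v_1,\dots,v_m\in S$. By the very definition of a convex polytope given earlier in the excerpt — a convex polytope is the convex hull of a finite set of points — we have $\Delta=Conv(\{v_1,\dots,v_m\})$. Since each $v_i\in S$, monotonicity of the convex hull map (property ii, the non-decreasing nature) gives $\Delta=Conv(\{v_1,\dots,v_m\})\subseteq Conv(S)$, hence $x\in Conv(S)$. This direction needs no real work.

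For the "only if" direction, suppose $x\in Conv(S)$. First I would invoke the characterization of the convex hull as the set of all finite convex combinations of elements of $S$; thus $x=\sum_{i=1}^{n}\lambda_i s_i$ with $s_i\in S$, $\lambda_i\ge 0$, $\sum_i\lambda_i=1$. Set $\Delta:=Conv(\{s_1,\dots,s_n\})$. By construction $x\in\Delta$, and $\Delta$ is a convex polytope, hence finite-dimensional (it lies in the affine span of finitely many points). It remains to check that the extremal points of $\Delta$ all lie in $S$: this is the one genuine step. An extremal point of $\Delta=Conv(\{s_1,\dots,s_n\})$ cannot be written as a proper convex combination of other points of $\Delta$; but every point of $\Delta$ is a convex combination of the $s_i$, so an extremal point must coincide with one of the $s_i$, and therefore belongs to $S$. (If one wants $x$ itself to be among the generators when $x$ is extremal, that is automatic; otherwise $x$ sits in the interior relative to the polytope, which is fine.)

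The main obstacle — really the only place requiring care — is the last step: arguing that $\mathrm{ext}(\Delta)\subseteq\{s_1,\dots,s_n\}\subseteq S$. One must handle the possibility of redundant or affinely dependent generators, so the cleanest route is to first discard any $s_i$ that is a convex combination of the others, reducing to an irredundant list; then each remaining $s_i$ is extremal and, conversely, every extremal point of $\Delta$ equals one of these $s_i$. Since this is quoted from \cite{Convexsets}, I would state the reduction briefly and cite the reference for the finite-dimensional Krein–Milman / Carathéodory fact rather than reproving it in detail.
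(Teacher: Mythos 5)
Your proof is correct, but note that the paper does not actually prove this proposition at all: it is stated as a result ``taken from \cite{Convexsets}'' (Valentine's \emph{Convex Sets}) and used as a black box, so there is no internal argument to compare against. Your two directions are the right ones. The ``only if'' direction is handled cleanly: writing $x=\sum_i\lambda_i s_i$ as a finite convex combination, setting $\Delta=Conv(\{s_1,\dots,s_n\})$, and observing that any extremal point of $\Delta$ must coincide with one of the generators $s_i$ (since it cannot be a proper convex combination of other points of $\Delta$) is exactly the standard argument, and your remark about discarding redundant generators is the right way to make $\mathrm{ext}(\Delta)\subseteq\{s_1,\dots,s_n\}$ airtight. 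The only place where you lean slightly on an unproved fact is the ``if'' direction: the paper's definition of a polytope says $\Delta=Conv(F)$ for \emph{some} finite set $F$, not that $F$ can be taken to be $\mathrm{ext}(\Delta)$; to conclude $\Delta=Conv(\mathrm{ext}(\Delta))$ you need Minkowski's theorem (finite-dimensional Krein--Milman), which you do acknowledge citing. Since the whole polytope sits in the affine span of finitely many points, hence in a finite-dimensional subspace where that theorem applies, this is a legitimate appeal and not a gap. In short, you have supplied an elementary, essentially self-contained proof of a statement the paper merely imports by reference.
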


\nd    This is all we need to formulate now our proposal in the next
subsection.

\subsection{Our separability proposal}

\nd We will here ``traduce'' the idea of non separability as a
special kind of set-theory relationship.
\begin{prop}\label{subirbajar}
If $\rho$ is a separable state, then there exists a convex set
(indeed, a polytope), $S_{\rho}\subseteq\mathcal{S}(\mathcal{H})$
such that $\rho\in S_{\rho}$ and
$\Lambda\circ\tau(S_{\rho})=S_{\rho}$. More generally, for a
convex set $C\subseteq \mathcal{S}(\mathcal{H})$, there exists a
convex set $S_{C}\subseteq\mathcal{S}(\mathcal{H})$ such that
$\Lambda\circ\tau(S_{C})=S_{C}$. For a product state, we can
choose $S_{\rho}=\{\rho\}$. For any convex set
$C\subseteq\mathcal{C}$ which has at least one non-separable state
it is true  that there is no convex set $S$ such that $C\subseteq
S$ and $\Lambda\circ\tau(S)=S$.
\end{prop}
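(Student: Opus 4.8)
The plan is to prove the four assertions of Proposition~\ref{subirbajar} in turn, relying on the structure of $\Lambda\circ\tau$ and on the polytope characterization of the convex hull quoted above. First I would handle the basic claim: given a separable state $\rho=\sum_{ij}\lambda_{ij}\rho_i^{1}\otimes\rho_j^{2}$, set $C_1=Conv(\{\rho_i^{1}\})$ and $C_2=Conv(\{\rho_j^{2}\})$ as in the previous proposition, and define $S_\rho:=\Lambda(C_1,C_2)=Conv(C_1\widetilde{\otimes} C_2)$. One checks $\rho\in S_\rho\subseteq\mathcal{S}(\mathcal{H})$. The key point is that $S_\rho$ is a fixed point of $\Lambda\circ\tau$. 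For this I would show the two inclusions: $\Lambda\circ\tau(S_\rho)\subseteq S_\rho$ follows because $\tau_i$ is realized by partial traces and, for any $\sigma\in S_\rho$, $\mathrm{tr}_2(\sigma)\in Conv(\tau_2(\{\rho_i^{1}\otimes\rho_j^{2}\}))=Conv(\{\rho_i^{1}\})=C_1$ by linearity of the partial trace on convex combinations, and similarly $\mathrm{tr}_1(\sigma)\in C_2$; hence $\tau(S_\rho)\subseteq(C_1,C_2)$ and applying $\Lambda$ (which is monotone) keeps us inside $\Lambda(C_1,C_2)=S_\rho$. Conversely $S_\rho\subseteq\Lambda\circ\tau(S_\rho)$ because each generator $\rho_i^{1}\otimes\rho_j^{2}$ has reductions $\rho_i^{1}$ and $\rho_j^{2}$, so $C_1\widetilde{\otimes} C_2\subseteq(\Lambda\circ\tau)(S_\rho)$ — using that $\tau_1(S_\rho)\supseteq\{\rho_i^{1}\}$, $\tau_2(S_\rho)\supseteq\{\rho_j^{2}\}$ — and then taking convex hulls.

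Next I would treat the more general statement for a convex set $C\subseteq\mathcal{S}(\mathcal{H})$. Here I would simply define $S_C:=\Lambda(\tau_1(C),\tau_2(C))=Conv(\tau_1(C)\widetilde{\otimes}\tau_2(C))$ and repeat the argument of the previous paragraph verbatim with $C_1=\tau_1(C)$, $C_2=\tau_2(C)$: the crucial facts are again that $\tau_i\circ\Lambda$ applied to a product of convex sets returns (the convex hull of) the original factors — because partial traces commute with forming convex combinations and $\mathrm{tr}_2(\rho_1\otimes\rho_2)=\rho_1$, $\mathrm{tr}_1(\rho_1\otimes\rho_2)=\rho_2$ — so that $\tau\circ\Lambda\circ\tau(C)=\tau(C)$ and hence $\Lambda\circ\tau(S_C)=\Lambda(\tau(C))=S_C$. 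The product-state case $S_\rho=\{\rho\}$ is then immediate from equation~\eqref{e:lambdaenunrho}, which gives $\Lambda\circ\tau(\{\rho\})=\{\rho^A\otimes\rho^B\}=\{\rho\}$ precisely when $\rho$ is a product state.

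Finally, for the negative assertion I would argue by contradiction: suppose $C\subseteq\mathcal{C}$ contains a non-separable state $\rho_0$ and that there is a convex $S$ with $C\subseteq S$ and $\Lambda\circ\tau(S)=S$. The image of $\Lambda$ is always contained in $\mathcal{S}(\mathcal{H})$, since $\Lambda(C_1,C_2)=Conv(C_1\widetilde{\otimes} C_2)\subseteq Conv(\mathcal{C}_1\widetilde{\otimes}\mathcal{C}_2)=\mathcal{S}(\mathcal{H})$; therefore $S=\Lambda\circ\tau(S)\subseteq\mathcal{S}(\mathcal{H})$, forcing $\rho_0\in C\subseteq S\subseteq\mathcal{S}(\mathcal{H})$, contradicting the non-separability of $\rho_0$. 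The only real obstacle is the bookkeeping in the fixed-point argument of the first two paragraphs — specifically, verifying carefully that passing to convex hulls commutes appropriately with $\tau_i$ (i.e. $\tau_i(Conv(X))=Conv(\tau_i(X))$, which holds because $\mathrm{tr}_j$ is linear and surjective onto its image) and that $\Lambda$ is monotone in each argument; once these two lemmas are in place the inclusions in both directions are routine, and the last part of the proposition is essentially trivial given that $\mathrm{Im}(\Lambda)\subseteq\mathcal{S}(\mathcal{H})$.
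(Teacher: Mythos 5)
Your treatment of the first, third and fourth assertions is essentially the paper's own argument. For a separable $\rho=\sum_{ij}\lambda_{ij}\rho_i^{1}\otimes\rho_j^{2}$ the paper works with the polytope $M$ of all convex combinations of the $\rho_i^{1}\otimes\rho_j^{2}$, which is exactly your $Conv(C_1\widetilde{\otimes}C_2)$, and it establishes $\Lambda\circ\tau(M)=M$ by the same two inclusions you describe (explicitly computing $\tau_1(M)=\{\sum_i\mu_i\rho_i^{1}\}$, $\tau_2(M)=\{\sum_j\nu_j\rho_j^{2}\}$ and using convexity of $\Lambda\circ\tau(M)$ together with the fact that each generator $\rho_i^{1}\otimes\rho_j^{2}$ is fixed). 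Your proof of the last assertion via $\mathrm{Im}(\Lambda)\subseteq\mathcal{S}(\mathcal{H})$ is also word-for-word the paper's argument. The two auxiliary facts you isolate ($\tau_i$ commutes with $Conv$ by linearity of the partial trace, and monotonicity of $\Lambda$) are sound.

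The one genuine problem is your choice of $S_C$ in the second assertion. You set $S_C:=\Lambda\circ\tau(C)$ and prove it is invariant via idempotency of $\Lambda\circ\tau$ (a correct observation, which the paper records separately as $(\Lambda\circ\tau)^2=\Lambda\circ\tau$). But this $S_C$ need not contain $C$: take $C=\{\rho\}$ with $\rho$ separable but not a product state; then $S_C=\{\rho^{A}\otimes\rho^{B}\}\not\ni\rho$. Read literally the proposition only asks for existence of an invariant set, but the phrase ``more generally'' (generalizing $\rho\in S_\rho$) and, decisively, the contrast with the final assertion --- which is about the nonexistence of an invariant $S$ \emph{with} $C\subseteq S$ --- make clear that the intended content is an invariant convex set containing $C$; without containment the claim is vacuous (any fixed product state singleton would do, independently of $C$). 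The paper instead takes $S_C=\mathcal{S}(\mathcal{H})$ itself, proving $\Lambda\circ\tau(\mathcal{S}(\mathcal{H}))=\mathcal{S}(\mathcal{H})$ from the fact that every generator $\rho_1\otimes\rho_2$ is fixed and the image is convex; this trivially contains every $C\subseteq\mathcal{S}(\mathcal{H})$. Your argument is repaired either by adopting that choice, or by replacing $S_C$ with $Conv\bigl(C\cup\Lambda\circ\tau(C)\bigr)$-type constructions only after checking invariance again --- the cleanest fix is simply $\mathcal{S}(\mathcal{H})$.
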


\begin{proof}
We have already seen above that if $\rho$ is a product state, then
$\Lambda\circ\tau(\{\rho\})=\{\rho\}$ and thus
$S_{\rho}=\{\rho\}$. If $\rho$ is a general separable state, then
there exists $\rho_{k}^{1}\in\mathcal{C}_{1}$,
$\rho_{k}^{2}\in\mathcal{C}_{2}$ and $\alpha_{k}\geq 0,
\sum_{k=1}^{N}\alpha_{k}=1$ such that
$\rho=\sum_{k=1}^{N}\alpha_{k}\rho_{k}^{1}\otimes\rho_{k}^{2}$.
Now consider the convex set (a polytope)

\begin{eqnarray}
M=\{\sigma\in\mathcal{C}\,|\,\sigma=\sum_{i,j=1}^{N}\lambda_{ij}\rho_{i}^{1}\otimes\rho_{j}^{2},\nonumber\\
\lambda_{ij}\geq 0, \sum_{i,j=1}^{N}\lambda_{ij}=1\}
\end{eqnarray}

\noindent $M$ contains all convex combinations of products of the
elements which appear in the decomposition of $\rho$. It should be
clear that $\rho\in M$. Let us compute $\Lambda\circ\tau(M)$, with
$\tau(M)=(\tau_{1}(M);\tau_{2}(M))$. An element of $\tau_{1}(M)$ is
of the form (for $\sigma\in M$)

\begin{equation}\label{e:cuenta}
\mbox{tr}_{1}(\sigma)=\sum_{i=1}^{N}(\sum_{j=1}^{N}\lambda_{ij})
\rho_{i}^{1}=\sum_{i=1}^{N}\mu_{i} \rho_{i}^{1},
\end{equation}

\noindent with $\mu_{i}=\sum_{j=1}^{N}\lambda_{ij}$. In analogous
fashion we show that an element of $\tau_{2}(M)$ is of the form
$\sum_{j=1}^{N}\nu_{j} \rho_{j}^{2}$ with
$\nu_{i}=\sum_{i=1}^{N}\lambda_{i,j}$. Note that
$\sum_{j=1}^{N}\mu_{j}=\sum_{j=1}^{N}\nu_{j}=1$. In order to compute
$\Lambda(\tau_{1}(M);\tau_{2}(M))$ we must build the convex hull of
the set

\begin{widetext}
\begin{equation}
\tau_{1}(M)\widetilde{\otimes}\tau_{2}(M)=\{\sigma_{1}\otimes\sigma_{2}|\sigma_{1}\in\tau_{1}(M),\sigma_{2}\in\tau_{2}(M)\}
=\{\sum_{i,j=1}^{N}\mu_{i}\nu_{j}\rho_{i}^{1}\otimes\rho_{j}^{2}\}.
\end{equation}
\end{widetext}

\noindent and  we conclude that

\begin{equation}\label{e:equality}
\Lambda\circ\tau(M)=Conv(\{\sum_{i,j=1}^{N}\mu_{i}\nu_{j}\rho_{i}^{1}\otimes\rho_{j}^{2}\}).
\end{equation}
Let us  prove that $\Lambda\circ\tau(M)=M$. If
$\sigma\in\Lambda\circ\tau(M)$, by looking at equation
(\ref{e:equality}) it is apparent that $\sigma$ belongs to $M$. On
the other hand, if $\sigma\in M$, then
$\sigma=\sum_{i,j=1}^{N}\lambda_{i,j}\rho_{i}^{1}\otimes\rho_{j}^{2}$
(convex combination). Note that $\Lambda\circ\tau(M)$ is a convex
set because trace operators preserve convexity and $\Lambda$ is a
convex hull. On the other hand,
$\Lambda\circ\tau(\{\rho_{i}^{1}\otimes\rho_{j}^{2}\})=\{\rho_{i}^{1}\otimes\rho_{j}^{2}\}$,
and, via the definition of
$\tau_{1}(M)\widetilde{\otimes}\tau_{2}(M)$, we have that
$\{\rho^{1}_{i}\otimes\rho^{2}_{j}\}\in\Lambda\circ\tau(M)$ for
all $i,j$. Thus, by the convexity of $\Lambda\circ\tau(M)$,
$\sigma\in\Lambda\circ\tau(M)$, which concludes the proof that
$\Lambda\circ\tau(M)=M$ (and  that $M$ is a polytope).
Consequently, $M$ is the desired
$S_{\rho}\subseteq\mathcal{S}(\mathcal{H})$.

\nd   If a given subset  $C \subseteq \mathcal{S}(\mathcal{H})$ then
all $\rho\in C$ are separable. $\mathcal{S}(\mathcal{H})$ is, by
definition, a convex set. Let us see that it is invariant under
$\Lambda\circ\tau$. First of all, we know that
$\mathcal{S}(\mathcal{H})$ is formed by all possible convex
combinations of products of the form $\rho_{1}\otimes\rho_{2}$, with
$\rho_{1}\in\mathcal{C}_{1}$ and $\rho_{2}\in\mathcal{C}_{2}$. But
for each one of these tensor products,
$\Lambda\circ\tau(\{\rho_{1}\otimes\rho_{2}\})=\{\rho_{1}\otimes\rho_{2}\}$,
and it is easy to see that they belong to
$\Lambda\circ\tau(\mathcal{S}(\mathcal{H}))$. Since this is a convex
set, all its convex combinations belong to it. Thus,  we conclude
that

\begin{equation}\label{e:S(H)isaCSS}
\Lambda\circ\tau(\mathcal{S}(\mathcal{H}))=\mathcal{S}(\mathcal{H}).
\end{equation}

\noindent This shows that for every $C\subseteq
\mathcal{S}(\mathcal{H})$ we can find an invariant convex subset
which is $\mathcal{S}(\mathcal{H})$ itself.

\vskip 3mm \nd Note here that there are cases in which   the set
$C\subseteq \mathcal{S}(\mathcal{H})$  may be a proper subset
(this is the case, for example, of product states) or a polytope
when we consider separable but non-product states. We remember at
this point the structural concept described by a definition of
section II.A. Consider $C\in\mathcal{L}_{\mathcal{C}}$ such that
there exists a given $\rho\in C$ with  $\rho$ {\it nonseparable}.
Now, $\Lambda\circ\tau(S)\subseteq \mathcal{S}(\mathcal{H})$ for
all $S\in\mathcal{L}_{\mathcal{C}}$. Then, it could never happen
that there exists $S\in\mathcal{L}_{\mathcal{C}}$ such that
$C\subseteq S$ and $\Lambda\circ\tau(S)=S$.
\end{proof}

\nd    From the last proposition we  derive our separability
criterium in terms of properties of convex sets that are
polytopes:

\begin{widetext}
\fbox{\parbox{6.0in}{
\begin{prop}\label{p:our criteria}
$\rho\in\mathcal{S}(\mathcal{H})$ if and only if there exists a
polytope $S_{\rho}$ such that $\rho\in S_{\rho}$ and
$\Lambda\circ\tau(S_{\rho})=S_{\rho}$.
\end{prop}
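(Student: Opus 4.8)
The plan is to read this proposition as a two-sided sharpening of Proposition \ref{subirbajar}, so that most of the labour has already been done. For the forward implication ($\Rightarrow$), if $\rho\in\mathcal{S}(\mathcal{H})$ then Proposition \ref{subirbajar} directly supplies a polytope $S_{\rho}$ with $\rho\in S_{\rho}\subseteq\mathcal{S}(\mathcal{H})$ and $\Lambda\circ\tau(S_{\rho})=S_{\rho}$: for a product state one takes $S_{\rho}=\{\rho\}$, and for a genuine convex combination $\rho=\sum_{k=1}^{N}\alpha_{k}\rho_{k}^{1}\otimes\rho_{k}^{2}$ one takes $S_{\rho}=M$, the set of all convex combinations $\sum_{i,j=1}^{N}\lambda_{ij}\rho_{i}^{1}\otimes\rho_{j}^{2}$ built from that same finite family of factors. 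The only point worth re-emphasizing is why $M$ is genuinely a \emph{polytope} and not merely some convex set: by construction it is the convex hull of the finite point set $\{\rho_{i}^{1}\otimes\rho_{j}^{2}\}_{i,j=1}^{N}$, so the proposition quoted from \cite{Convexsets} applies, and the computation of $\tau_{1}(M)$, $\tau_{2}(M)$ and $\tau_{1}(M)\widetilde{\otimes}\tau_{2}(M)$ carried out in the proof of Proposition \ref{subirbajar} shows that $\Lambda\circ\tau$ returns this hull to itself.

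For the converse ($\Leftarrow$), I would isolate the single structural fact that makes it work: $\Lambda\circ\tau(S)\subseteq\mathcal{S}(\mathcal{H})$ for every $S\in\mathcal{L}_{\mathcal{C}}$. Indeed $\tau(S)=(\tau_{1}(S),\tau_{2}(S))$ lands in $\mathcal{L}_{\mathcal{C}_{1}}\times\mathcal{L}_{\mathcal{C}_{2}}$, and $\Lambda$ then produces $Conv(\tau_{1}(S)\widetilde{\otimes}\tau_{2}(S))$, a convex hull of product states, which by the very definition of $\mathcal{S}(\mathcal{H})$ lies inside $\mathcal{S}(\mathcal{H})$. Hence if $S_{\rho}$ is any convex set — a fortiori any polytope — with $\Lambda\circ\tau(S_{\rho})=S_{\rho}$, then $S_{\rho}=\Lambda\circ\tau(S_{\rho})\subseteq\mathcal{S}(\mathcal{H})$, and since $\rho\in S_{\rho}$ we get $\rho\in\mathcal{S}(\mathcal{H})$, i.e. $\rho$ is separable. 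Combining the two halves gives the stated equivalence.

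I do not anticipate a substantive obstacle, since the delicate ingredient — verifying the fixed-point identity $\Lambda\circ\tau(M)=M$ for the finitely generated set $M$ — is already established in the proof of Proposition \ref{subirbajar}. The only thing to watch is bookkeeping: one must use the same index range $N$ for both subsystems when forming $M$, so that its generating set is literally $\{\rho_{i}^{1}\otimes\rho_{j}^{2} : 1\le i,j\le N\}$; this is precisely what guarantees simultaneously that $M$ is a polytope and that it is invariant under $\Lambda\circ\tau$.
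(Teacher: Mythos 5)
Your proposal is correct and follows essentially the same route as the paper: the forward direction invokes the polytope $M=Conv(\{\rho_{i}^{1}\otimes\rho_{j}^{2}\})$ constructed in the proof of Proposition \ref{subirbajar}, and the converse rests on the same observation the paper uses there, namely that $\Lambda\circ\tau(S)=Conv(\tau_{1}(S)\widetilde{\otimes}\tau_{2}(S))\subseteq\mathcal{S}(\mathcal{H})$ for every $S\in\mathcal{L}_{\mathcal{C}}$, so any invariant set containing $\rho$ forces $\rho$ to be separable. No gaps.
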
}}
\end{widetext}

\nd    In Figure \ref{f:Politope} we display a geometric
representation of the polytope $S_{\rho}$ for a separable state.
We see that the function $\Lambda\circ\tau$ is sensible to
entanglement if applied to convex subsets of $\mathcal{C}$.
Looking at (\ref{e:lambdaenunrho}), it is also clear that
$\Lambda\circ\tau$ is a generalization of $\Omega$ to convex
subsets of $\mathcal{C}$.

\vskip 3mm \nd With this extension, Proposition \ref{p:our criteria}
asserts that

\begin{widetext}
\fbox{\parbox{5.0in}{ A state is separable if and only if it
belongs to an invariant polytope of $\Lambda\circ\tau$.
Separability entails membership in a special kind of convex set.}}
\end{widetext}

\vskip 3mm
 \nd  Clearly, starting from Proposition
\ref{p:our criteria} we can derive the family of functions of the
form (\ref{e:family}). Why? Because if we restrict the function
$\Lambda\circ\tau$ to convex sets formed by only one density matrix
we  obtain Equation (\ref{e:lambdaenunrho}) entailing that, if one
knows that $\Lambda\circ\tau$ is sensible to entanglement via
Proposition \ref{p:our criteria}, it is natural to regard  the norm
of the difference between $\rho$ and $\rho^{A}\otimes\rho^{B}$ as an
entanglement measure's candidate. Our set-theory approach becomes
then an {\it a posteriori} argument that in a sense ``explains'' the
MS measure.

\vskip 3mm \nd Let it be understood that we are restricting
$\Lambda\circ\tau$ to one-element sets $\{\rho\}$. With some abuse
of notation (which consists in avoiding the use of the keys
$\{\cdots\}$) we write

\begin{equation}\label{e:abuse}
\Lambda\circ\tau(\rho):=\rho^{A}\otimes\rho^{B}=\Omega(\rho)
\end{equation}

\begin{figure*}
\begin{center}
\includegraphics[width=8cm]{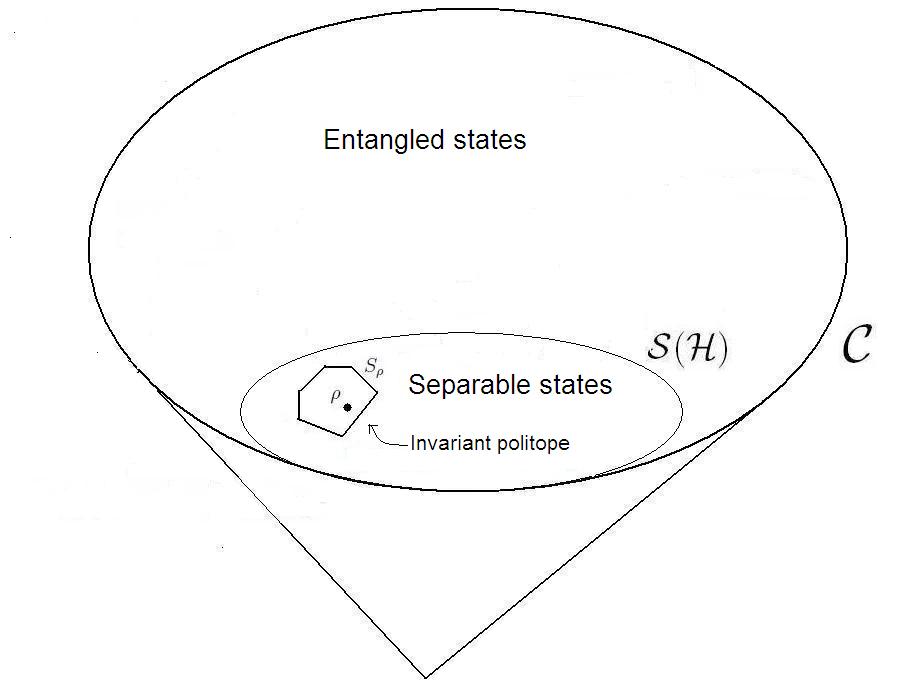}
\caption{\label{f:Politope}\small{Geometric representation of the
invariant polytope which satisfies
$\Lambda\circ\tau(S_{\rho})=S_{\rho}$ and $\rho\in S_{\rho}$.
$\rho$ is separable if and only if there exists such a polytope.}}
\end{center}
\end{figure*}

\section{Generalized product states}\label{s:Implications}

\nd We delve here into an interesting analogy. Denote the set of
product states by $\mathcal{S}_{0}(\mathcal{H}).$ Restricting
(\ref{e:abuse}) to product states we have

\begin{equation}\label{e:criteriaproduct}
\rho\in\mathcal{S}_{0}(\mathcal{H})\Leftrightarrow
\Lambda\circ\tau(\rho)=\rho\,\,\,(\Leftrightarrow\Omega(\rho)=\rho)
\end{equation}

From the discussion of the last section it is clear that our
criterium is analogous to (\ref{e:criteriaproduct}), being a
generalization of it to convex subsets of $\mathcal{C}$ because  we
have

\begin{equation}\label{e:exclusiveproperty}
\rho\in\mathcal{S}(\mathcal{H})\Leftrightarrow
\Lambda\circ\tau(S_{\rho})=S_{\rho},
\end{equation}
\noindent with $\rho\in S_{\rho}$. Accordingly, we are in some sense
generalizing a property of product states to any arbitrary separable
state. As $\Lambda\circ\tau$ generally transforms any convex set
into a different convex subset of $\mathcal{S}(\mathcal{H})$,
(\ref{e:exclusiveproperty}) constitutes a geometrical property,
characteristic of  separable states. Thus, we advance here a
``convex set" generalization of the notion of product state.

\begin{widetext}
\fbox{\parbox{6.0in}{
\begin{definition}
A convex subset $C\subseteq\mathcal{C}$ such that
$\Lambda\circ\tau(C)=C$ is called a \emph{convex separable subset}
(CSS) of $\mathcal{C}$.
\end{definition}}}
\end{widetext}

\vskip 3mm \nd    Due to the arguments given above, product states
are limit cases of convex separable subsets (they constitute the
special case when the CSS has only one point).

\vskip 3mm \nd  An interesting open problem would then be that of
looking for convex separable subsets of the function
$\Lambda\circ\tau$. Looking at (\ref{e:S(H)isaCSS}), we find that
$\mathcal{S}(\mathcal{H})$ is a CSS (and indeed, the largest one).
In this sense, CSS may be considered as small ``copies" of
$\mathcal{S}(\mathcal{H})$.

\nd    In general, convex subsets of $\mathcal{C}$ may be
considered as probability spaces by themselves, because they are
closed under convex combination of states. Thus, {\it CSS are
probability spaces inside} $\mathcal{S}(\mathcal{H})$ which are
left invariant under the action of $\Lambda\circ\tau$ (and so,
they have the same invariance property). The fact that
$\mathcal{S}(\mathcal{H})$ is a CSS also tells us that the convex
separable subsets can be more general sets and not necessarily
just polytopes (because $\mathcal{S}(\mathcal{H})$ is not a
polytope). Indeed, we may ask for ways to characterize the set of
all convex separable subsets (which we denote by
$\daleth(\mathcal{C})$) by looking at the following property of
$\Omega$. If $\rho$ is an arbitrary density matrix, then

\begin{eqnarray}
&\Omega^{2}(\rho)=\Omega(\Omega(\rho))=\Omega(\rho^{A}\otimes\rho^{B})=&\nonumber\\
&\rho^{A}\otimes\rho^{B}=\Omega(\rho)\nonumber\\
\end{eqnarray}
\noindent or, in other words,

\begin{equation}\label{e:omegacuadrado}
\Omega^{2}=\Omega.
\end{equation}

\nd For $\Lambda\circ\tau$ and an arbitrary convex subset $C$ one
has

\begin{eqnarray}
&\Lambda\circ\tau(C)=\Lambda(\tau_{1}(C),\tau_{2}(C))=&\nonumber\\
&Conv(\tau_{1}(C)\widetilde{\otimes}\tau_{2}(C))&.
\end{eqnarray}
If we apply $\Lambda\circ\tau$ again, we will find (with arguments
expounded in the preceding section, see \ref{subirbajar}) that
$Conv(\tau_{1}(C)\widetilde{\otimes}\tau_{2}(C))$ is a CSS. This, in
turn, entails that

\begin{equation}\label{e:lambdataucuadrado}
(\Lambda\circ\tau)^{2}=\Lambda\circ\tau.
\end{equation}
\noindent Consequently, our generalization of $\Omega$ satisfies an
equality equivalent to (\ref{e:omegacuadrado}). This fact can be
gainfully  used to characterize $\daleth(\mathcal{C})$ as

\begin{equation}\label{e:characterizationofaleth}
\daleth(\mathcal{C})=\{\Lambda\circ\tau(C)\,\,|\,\,C\subseteq\mathcal{C}\},
\end{equation}
\noindent because, if $C$ is a CSS, it is equal to
$\Lambda\circ\tau(C)$, and thus we face one inclusion. The other
inclusion comes from the fact that, for an arbitrary
$C\subseteq\mathcal{C}$, (\ref{e:lambdataucuadrado}) implies that
$\Lambda\circ\tau(C)$ belongs to $\daleth(\mathcal{C})$. Equation
\ref{e:characterizationofaleth} simply asserts that
$\daleth(\mathcal{C})$ equals the image of
$\mathcal{L}_{\mathcal{C}}$ under $\Lambda\circ\tau$.

\nd Now we see that  while in equation (\ref{e:family}), the ``core"
was the function $\rho-\Lambda\circ\tau(\rho)$, now we have a new
core

\begin{equation}
\Lambda\circ\tau(C)\setminus C,
\end{equation}

\noindent where ``$\backslash$'' stands for set-theoretical
difference, and we can try to measure the difference between $C$ and
its variation under $\Lambda\circ\tau$ in different ways. We will
have a CSS if $C$ and $\Lambda\circ\tau(C)$ coincide. \vskip 3mm \nd
A possibility for measuring how different are $C$ and
$\Lambda\circ\tau(C)$ would entail looking for a generalization of,
for example, the relative entropy, which for a density matrix reads

\begin{equation}
S(\rho\|\sigma)=-\mbox{tr}(\rho \mbox{log}(\sigma))-S(\rho),
\end{equation}

\noindent where $S(\rho):=-\mbox{tr}(\rho \mbox{log}(\rho))$. Remark
that the relative entropy concept has been used as a unifying
approach for quantum and classical correlations \cite{UnifiedView}.
When applied to convex subsets $C$ and $C'$ of $\mathcal{C}$, we are
now conjecturing that

\begin{equation}
S(C\|C'):=\inf_{\rho\in C,\sigma\in C'} S(\rho\|\sigma),
\end{equation}
\noindent and use this conjecture to define

\begin{equation}\label{e:relativeentropylambdatau}
\widetilde{F}(C):=S(\Lambda\circ\tau(C)\|C).
\end{equation}
\nd    $\widetilde{F}(C)$ clearly vanishes when
$\Lambda\circ\tau(C)=C$, and in general, when
$\Lambda\circ\tau(C)\cap C\neq\emptyset$. This last condition
implies (in particular) that there are separable states which belong
to $C$. We are free to use any divergence (or distance) instead of
the relative entropy for the purpose of  measuring the difference
between $C$ and $\Lambda\circ\tau(C)$ by making a similar
construction.

\vskip 3mm \nd    Let us now study the segment joining $\rho$ and
$\Lambda\circ\tau(\rho)$. This segment is given by

\begin{equation}\label{e:line}
\mathrm{L}_{\rho}=\{x\rho+(1-x)\Lambda\circ\tau(\rho)\,\,|\,\,x\in[0,1]\}.
\end{equation}
\noindent If $\rho$ is separable, using i) the polytope
$S_{\rho}\subseteq \mathcal{S}(\mathcal{H})$ of proposition
\ref{subirbajar}, ii) that $\rho$ and $\Lambda\circ\tau(\rho)$
belong to $S_{\rho}$,  and iii) that $S_{\rho}$ is convex, we have

\begin{prop}\label{p:segment}
$\\\mathrm{L}_{\rho}\subseteq S_{\rho}\subseteq
\mathcal{S}(\mathcal{H}).$
\end{prop}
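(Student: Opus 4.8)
The plan is to deduce everything from Proposition~\ref{subirbajar} together with plain convexity, with essentially no new work. Since the statement implicitly assumes $\rho$ separable, I first invoke Proposition~\ref{subirbajar} to produce the polytope $S_{\rho}\subseteq\mathcal{S}(\mathcal{H})$ with $\rho\in S_{\rho}$ and $\Lambda\circ\tau(S_{\rho})=S_{\rho}$. This already delivers the right-hand inclusion $S_{\rho}\subseteq\mathcal{S}(\mathcal{H})$, so only $\mathrm{L}_{\rho}\subseteq S_{\rho}$ needs an argument.

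Next I check that both endpoints of the segment lie in $S_{\rho}$. The point $\rho$ is in $S_{\rho}$ by construction. For the other endpoint I use that $\Lambda\circ\tau$ is non-decreasing with respect to set inclusion: it is the composition of the image maps $\tau_{i}$ under partial traces (which preserve inclusions) with $\Lambda$, which involves $Conv(\cdot)$, non-decreasing by property~(ii) of the convex-hull map recalled in Section~\ref{s:preliminaries}. Applying this to $\{\rho\}\subseteq S_{\rho}$ gives $\Lambda\circ\tau(\{\rho\})\subseteq\Lambda\circ\tau(S_{\rho})=S_{\rho}$, and by~(\ref{e:lambdaenunrho}) the left-hand side is exactly $\{\rho^{A}\otimes\rho^{B}\}=\{\Lambda\circ\tau(\rho)\}$ in the notation of~(\ref{e:abuse}). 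Hence $\Lambda\circ\tau(\rho)\in S_{\rho}$.

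Finally, $S_{\rho}$ is a polytope, hence convex, so it contains every point $x\rho+(1-x)\Lambda\circ\tau(\rho)$ with $x\in[0,1]$; that is, $\mathrm{L}_{\rho}\subseteq S_{\rho}$. Chaining this with the first step yields $\mathrm{L}_{\rho}\subseteq S_{\rho}\subseteq\mathcal{S}(\mathcal{H})$, as claimed.

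There is no real obstacle here; the only mildly delicate point is the monotonicity of $\Lambda\circ\tau$ invoked above, and even that can be bypassed: writing $\rho=\sum_{k}\alpha_{k}\,\rho_{k}^{1}\otimes\rho_{k}^{2}$ one computes $\rho^{A}\otimes\rho^{B}=\sum_{i,j}\alpha_{i}\alpha_{j}\,\rho_{i}^{1}\otimes\rho_{j}^{2}$, a convex combination of the generating products, which visibly belongs to the polytope $M=S_{\rho}$ built in the proof of Proposition~\ref{subirbajar}. Either route closes the argument immediately.
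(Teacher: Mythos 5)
Your proof is correct and follows essentially the same route as the paper, which also obtains the result by combining Proposition~\ref{subirbajar} with the membership of the two endpoints $\rho$ and $\Lambda\circ\tau(\rho)$ in $S_{\rho}$ and the convexity of $S_{\rho}$. The only difference is that you explicitly justify $\Lambda\circ\tau(\rho)\in S_{\rho}$ (via monotonicity of $\Lambda\circ\tau$, or by the direct computation $\rho^{A}\otimes\rho^{B}=\sum_{i,j}\alpha_{i}\alpha_{j}\,\rho_{i}^{1}\otimes\rho_{j}^{2}\in M$), a detail the paper merely asserts.
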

\nd Coming back again to the  demonstration of Proposition
\ref{subirbajar}, and considering that the decomposition of a
separable state as a convex combination of product states is not
unique, we conclude that the invariant polytope is not unique.
However, from the above proposition it is obvious that

\begin{equation}\label{e:lineintersection}
\mathrm{L}_{\rho}\subseteq \cap\{C \,|\,\Lambda\circ\tau(C)=C \,
\mbox{and} \, \rho\in C\}
\end{equation}

If there exists at least one nonseparable state in the segment
joining $\rho$ and $\Lambda\circ\tau(\rho)$, then $\rho$ cannot be a
separable state.  This is a consequence of the convexity of
$\mathcal{S}(\mathcal{H})$, but also follows from
(\ref{e:lineintersection}). Is this fact  of  advantage for deciding
on the separability of a given state? Indeed it is, if we use it in
the following way. Given $\rho$, we parameterize the line segment
between $\rho$ and $\rho^{A}\otimes\rho^{B}$ as in proposition
\ref{p:segment}. Afterwards,  we apply this to all the points in the
segment. If one  finds a nonseparable state in the segment we
conclude that $\rho$ is nonseparable.

\vskip 3mm \nd We consider now the action of the group of unitary
local transformations of the form $U=U^{1}\otimes U^{2}$ on the
invariant polytope, where $U^{1,2}\in U^{\mathcal{K}^{1,2}}$. If
$\rho=\sum_{i}p_{i}\rho_{i}^{A}\otimes\rho_{i}^{B}$ is a separable
state, then this action will be given by

\begin{equation}\label{e:unitarytransfrho}
U\rho U^{\dagger}=\sum_{i}p_{i}U^{1}\rho_{i}^{A}U^{1\dagger}\otimes
U^{2}\rho_{i}^{B}U^{2\dagger}.
\end{equation}
We can prove that

\begin{prop}\label{p:unitaryprop}
If $\rho\in\mathcal{S}(\mathcal{H})$ and $\mathrm{P}_{\rho}$ is an
invariant polytope (as the one in the demonstration of proposition
\ref{subirbajar}), then $U\mathrm{P}_{\rho}U^{\dagger}$ is an
invariant polytope for $U\rho U^{\dagger}$.
\end{prop}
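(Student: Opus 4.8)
The plan is to reduce everything to one intertwining property: conjugation by a local unitary $U=U^{1}\otimes U^{2}$ commutes with $\Lambda\circ\tau$, in the sense that
$\Lambda\circ\tau(UCU^{\dagger})=U\,(\Lambda\circ\tau(C))\,U^{\dagger}$ for every convex subset $C\in\mathcal{L}_{\mathcal{C}}$, where $UCU^{\dagger}:=\{U\sigma U^{\dagger}\,|\,\sigma\in C\}$ (note that $UCU^{\dagger}$ is again a convex subset of $\mathcal{C}$, since $\sigma\mapsto U\sigma U^{\dagger}$ is linear and preserves positivity, hermiticity and trace). Granting this, the proposition follows in three short moves. (i) $U\mathrm{P}_{\rho}U^{\dagger}$ is again a polytope, being the image of the finite generating set of $\mathrm{P}_{\rho}$ under the affine bijection $\sigma\mapsto U\sigma U^{\dagger}$, so that $U\,Conv(\cdots)\,U^{\dagger}=Conv(U\cdots U^{\dagger})$. (ii) $U\rho U^{\dagger}\in U\mathrm{P}_{\rho}U^{\dagger}$ because $\rho\in\mathrm{P}_{\rho}$, and $U\mathrm{P}_{\rho}U^{\dagger}\subseteq\mathcal{S}(\mathcal{H})$ because local unitaries preserve separability: conjugating a convex combination of product states yields, by (\ref{e:unitarytransfrho}), another convex combination of product states. (iii) Invariance: $\Lambda\circ\tau(U\mathrm{P}_{\rho}U^{\dagger})=U(\Lambda\circ\tau(\mathrm{P}_{\rho}))U^{\dagger}=U\mathrm{P}_{\rho}U^{\dagger}$, using the intertwining property and the fact that $\mathrm{P}_{\rho}$ is invariant, $\Lambda\circ\tau(\mathrm{P}_{\rho})=\mathrm{P}_{\rho}$.

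To establish the intertwining property I would first check it at the level of partial traces. For a product operator $a\otimes b$, cyclicity of the trace acting on the second factor gives
$\mbox{tr}_{2}\big((U^{1}\otimes U^{2})(a\otimes b)(U^{1}\otimes U^{2})^{\dagger}\big)=U^{1}aU^{1\dagger}\,\mbox{tr}(U^{2}bU^{2\dagger})=U^{1}\,\mbox{tr}_{2}(a\otimes b)\,U^{1\dagger}$, and by linearity of $\mbox{tr}_{2}$ (write $\sigma=\sum_{k}a_{k}\otimes b_{k}$ in any operator basis) one gets $\mbox{tr}_{2}(U\sigma U^{\dagger})=U^{1}\,\mbox{tr}_{2}(\sigma)\,U^{1\dagger}$ for all $\sigma$; symmetrically $\mbox{tr}_{1}(U\sigma U^{\dagger})=U^{2}\,\mbox{tr}_{1}(\sigma)\,U^{2\dagger}$. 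Hence $\tau_{1}(UCU^{\dagger})=U^{1}\tau_{1}(C)U^{1\dagger}$ and $\tau_{2}(UCU^{\dagger})=U^{2}\tau_{2}(C)U^{2\dagger}$. Next, for the quasi-tensor composition of Definition \ref{d:tensorconvex}, the elementary identity $(U^{1}\sigma_{1}U^{1\dagger})\otimes(U^{2}\sigma_{2}U^{2\dagger})=U(\sigma_{1}\otimes\sigma_{2})U^{\dagger}$ yields $\big(U^{1}\tau_{1}(C)U^{1\dagger}\big)\widetilde{\otimes}\big(U^{2}\tau_{2}(C)U^{2\dagger}\big)=U\big(\tau_{1}(C)\widetilde{\otimes}\tau_{2}(C)\big)U^{\dagger}$. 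Finally, since $\sigma\mapsto U\sigma U^{\dagger}$ is a linear isomorphism of $\mathcal{A}$ it maps convex hulls onto convex hulls, so applying $Conv(\cdots)$ and unfolding Definition \ref{d:lambda} gives exactly $\Lambda\circ\tau(UCU^{\dagger})=U(\Lambda\circ\tau(C))U^{\dagger}$.

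The only point requiring genuine care is the partial-trace computation: one must be explicit that the $U^{2}$ factors drop out of $\mbox{tr}_{2}$ via $\mbox{tr}(U^{2}bU^{2\dagger})=\mbox{tr}(b)$, and then invoke linearity of $\mbox{tr}_{2}$ to pass from product operators to arbitrary $\sigma\in\mathcal{C}$. Everything else — images of polytopes are polytopes, conjugation preserves separability and convexity, $Conv$ commutes with affine bijections — is routine and was already used implicitly in the proof of Proposition \ref{subirbajar}. I would also remark that the argument in fact shows more: the map $\Lambda\circ\tau$, the lattice $\mathcal{L}_{\mathcal{C}}$, and the family of convex separable subsets are all \emph{equivariant} under the group of local unitaries, which is the structural counterpart of the invariance of the Schlienz--Mahler measure under local unitary transformations noted earlier.
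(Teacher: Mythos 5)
Your proof is correct, but it takes a genuinely different route from the paper's. The paper argues directly from the explicit form of the polytope: writing $\mathrm{P}_{\rho}=\{\sum_{i,j}\lambda_{ij}\rho_{i}^{A}\otimes\rho_{j}^{B}\,|\,\sum_{i,j}\lambda_{ij}=1,\,\lambda_{ij}\geq 0\}$ and using linearity of conjugation, it observes that $U\mathrm{P}_{\rho}U^{\dagger}$ is precisely the polytope that the construction of Proposition \ref{subirbajar} would produce for the state $U\rho U^{\dagger}$ with its decomposition (\ref{e:unitarytransfrho}) in terms of the transformed product states $U^{1}\rho_{i}^{A}U^{1\dagger}\otimes U^{2}\rho_{j}^{B}U^{2\dagger}$; invariance is then inherited from that earlier proof. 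You instead establish the stronger equivariance identity $\Lambda\circ\tau(UCU^{\dagger})=U\,(\Lambda\circ\tau(C))\,U^{\dagger}$ for an \emph{arbitrary} convex subset $C$, by tracking the local unitary through the partial traces, the quasi-tensor composition $\widetilde{\otimes}$, and the convex hull, and then deduce the proposition in one line from $\Lambda\circ\tau(\mathrm{P}_{\rho})=\mathrm{P}_{\rho}$. What your approach buys: it does not depend on the particular convex decomposition of $\rho$, so it shows at once that local unitaries map \emph{any} CSS to a CSS, and it obtains as the engine of the proof, rather than as an after-the-fact corollary, the symmetry relation $\Lambda\circ\tau(U\mathrm{P}_{\rho}U^{\dagger})=U(\Lambda\circ\tau(\mathrm{P}_{\rho}))U^{\dagger}$ that the paper only states following the proposition. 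What the paper's approach buys: it is shorter and makes visually evident that the transformed polytope is again of the canonical form, generated by products of the transformed local states. The one genuinely non-routine ingredient in your argument, the intertwining $\mbox{tr}_{2}(U\sigma U^{\dagger})=U^{1}\,\mbox{tr}_{2}(\sigma)\,U^{1\dagger}$ obtained from product operators plus linearity, is correct, and your remark that $U\mathrm{P}_{\rho}U^{\dagger}$ is again a polytope containing $U\rho U^{\dagger}$ properly covers the parts of the claim that the paper leaves implicit.
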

\begin{proof}
If $\rho=\sum_{i}p_{i}\rho_{i}^{A}\otimes\rho_{i}^{B}$, then an
invariant polytope is given by

\begin{equation}
\mathrm{P}_{\rho}=\{\sum_{i,j}\lambda_{ij}\rho_{i}^{A}\otimes\rho_{j}^{B}\,|\,\sum_{i,j}\lambda_{ij}=1\,,\,\lambda_{ij}\geq
0\}.
\end{equation}

\noindent Because of the linearity of $U$, it is easy to see that
$\mathrm{P}_{\rho}$ is transformed into

\begin{eqnarray}
U\mathrm{P}_{\rho}U^{\dagger}=\{\sum_{i,j}\lambda_{ij}U^{1}\rho_{i}^{A}U^{2\dagger}\otimes
U^{2}\rho_{j}^{B}U^{2\dagger}\,|\,\nonumber\\
\sum_{i,j}\lambda_{ij}=1\,,\,\lambda_{ij}\geq 0\},
\end{eqnarray}

and as $\rho$ is transformed as equation
(\ref{e:unitarytransfrho}), then $U\mathrm{P}_{\rho}U^{\dagger}$
is an invariant polytope.
\end{proof}
\nd   The last proposition shows how invariant polytopes are
transformed under unitary local transformations. As
$\mathcal{S}(\mathcal{H})$ is invariant under these
transformations, we see that they transform invariant polytopes
into other invariant polytopes. Notice that \ref{p:unitaryprop}
implies (for invariant polytopes) that under an arbitrary local
transformation $U$

\begin{equation}
\Lambda\circ\tau(U\mathrm{P}_{\rho}U^{\dagger})=U\mathrm{P}_{\rho}U^{\dagger}=U(\Lambda\circ\tau(\mathrm{P}_{\rho}))U^{\dagger},
\end{equation}
\noindent which reveals an interesting  symmetry property of
$\Lambda\circ\tau$.

\section{Discussion}
\subsection{A conceptual analogy}\label{s:moreconceptual}

\noindent For clarity's sake we condense here in a more conceptual
fashion some of the technical implications of the foregoing
sections via  appeal to a
comparison with the separability-notion for pure states. Its characterization in the bipartite is simple.
 $\rho=|\psi\rangle\langle\psi|$ will be separable if and only
if it is a product of pure reduced states, i.e., if and only if
there exist $|\phi_{2}\rangle\in\mathcal{H}_{1}$ and
$|\phi_{2}\rangle\in\mathcal{H}_{2}$ such that
$|\psi\rangle=|\phi_{1}\rangle\otimes|\phi_{2}\rangle$. In
mathematical terms (take care of not to be  confused by equation
(\ref{e:criteriaproduct}))

\begin{eqnarray}\label{e:pureseparable}
|\psi\rangle\langle\psi|\in\mathcal{S}(\mathcal{H})\Leftrightarrow\Lambda\circ\tau(|\psi\rangle\langle\psi|)=|\psi\rangle\langle\psi|\nonumber\\
(\Leftrightarrow\Omega(|\psi\rangle\langle\psi|)=|\psi\rangle\langle\psi|).
\end{eqnarray}
\noindent It is well known that the case of mixed states is much
more complicated than that of pure ones. But we may still ask if
it is possible to develop a similar line of reasoning for mixed
states. The existence of such a construction would allow for a
more transparent  view of the  entanglement of mixed states (and
thus for \emph{all} states, generalizing (\ref{e:pureseparable})).
The results and constructions presented in previous sections of
this article indicate that a structure similar to that of equation
(\ref{e:pureseparable}) can indeed be constructed. \vskip 3mm \nd
This fact makes for  a remarkable analogy, unknown in the
literature, whose  explanation is as follows. We showed in section
\ref{s:Implications} that the function $\Lambda\circ\tau$
(introduced in section \ref{s:New Criteria}) is a suitable
extension to convex subsets of the function $\Omega$ (look at
Definition \ref{e:assignment}). We also introduced the
physical-informational notion of CSS, an informational invariant
convex subset, i.e., a set whose information can be recovered
using the sets of its corresponding reduced states. In this sense,
\emph{they are informational invariants}. As shown in section
\ref{s:Implications}, \emph{they are a suitable generalization of
the notion of product state to all convex subsets of
$\mathcal{C}$}.

\noindent Thus, as happens in the pure state case, we have
developed a generalization which asserts that an arbitrary state
is separable if and only if it is an element of an informational
invariant that we have called CSS. Our math-constructions and
entanglement criteria (linked to the SM measure) highlight the
non-trivial result that the structure found for the pure states
case can be properly generalized to  arbitrary states,  a clear
 physical simplification.

\noindent But the analogy/generalization does not stops here. We
can develop still a new analogy/generalization,  not contained in
the precedent  sections. It is well known that another equivalent
condition for separability of pure states may be given using von
Neuman's entropy, which reads: $\rho=|\psi\rangle\langle\psi|$ is
separable if and only if the von Neuman's entropy of its reduced
states attains its minimum possible value (zero). In mathematical
terms

\begin{eqnarray}\label{e:pureseparableshanon}
|\psi\rangle\langle\psi|\in\mathcal{S}(\mathcal{H})\Leftrightarrow
S_{vN}(\rho^{A})=0\nonumber\\
\mbox{and}\,\, S_{vN}(\rho^{B})=0,
\end{eqnarray}
\noindent where $\rho^{A}$ and $\rho^{B}$ are the reduced states
of $|\psi\rangle\langle\psi|$ and $S_{vN}(\cdot)$ is the well
known von Neuman's entropy functional, defined by

\begin{equation}
S_{vN}=-\mbox{tr}(\rho\mbox{ln}(\rho)).
\end{equation}
\noindent Can we concoct something similar for mixed states?
Caratheodory's theorem (for finite dimensions) grants that any
separable state admits a finite convex decomposition in terms of
pure product states. In mathematical terms, this means that there
exists pure states
$|\varphi_{i}\rangle\langle\varphi_{i}|\in\mathcal{C}_{1}$,
$|\phi_{i}\rangle\langle\phi_{i}|\in\mathcal{C}_{2}$ and a finite
collection of convex coefficients $\lambda_{i}$ such that

\begin{equation}
\rho\in\mathcal{S}(\mathcal{H})\Leftrightarrow
\sum_{i}\lambda_{i}(|\varphi_{i}\rangle\langle\varphi_{i}|)\otimes(|\phi_{i}\rangle\langle\phi_{i}|).
\end{equation}
\noindent It is easy to show that this decomposition combined with
our separability criteria \ref{p:our criteria} (look at the
demonstration of it) implies that there exists a polytope, call it
$P_{pure}$, whose vertices are just products of pure states. This
implies that if we now compute the infimum of the von Neuman
entropy evaluated on the elements of $P_{pure}$ we will obtain its
minimum value, because as it is well known, von Neuman entropy
attains its minimum value for such states. In other words,

\begin{widetext}
\begin{equation}
\inf\{S_{vN}(\rho)\,|\,\rho\in
P_{pure}\}=\min\{S_{vN}(\rho)\,|\,\rho\in P_{pure}\}=0.
\end{equation}
\end{widetext}
\noindent Thus, the analogy advanced in this section is more than
a simple coincidence or mathematical artifice, because in accord
with Eq. (\ref{e:pureseparableshanon}), we now have that for
\emph{any state (pure or mixed)},

\begin{eqnarray}
\rho\in\mathcal{S}(\mathcal{H})\Leftrightarrow \exists
P_{pure},\,\,\mbox{such that}\nonumber\\
\min\{S_{vN}(\rho)\,|\,\rho\in P_{pure}\}=0,
\end{eqnarray}
\noindent where $P_{pure}$ represents a polytope whose vertices
are products of pure states. Thus, we can sum up some of the
results of this article by just using the following words:

\begin{widetext}
\fbox{\parbox{6.0in}{
\begin{prop}\label{p:our criteria}
$\rho$ is a separable state $\Leftrightarrow$ it belongs to a CSS
(i.e., a convex subset which generalizes product states and is
invariant under the function defined by equation
(\ref{e:assignment})) $\Leftrightarrow$ it belongs to a  CSS on
which the von Neuman's entropy reaches its minimum value.
\end{prop}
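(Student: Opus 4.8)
The plan is to prove the displayed triple equivalence by closing a short cycle of implications among the three conditions on a state $\rho$: (i) $\rho\in\mathcal{S}(\mathcal{H})$; (ii) $\rho$ lies in some CSS; (iii) $\rho$ lies in some CSS $C$ with $\min\{S_{vN}(\sigma)\mid\sigma\in C\}=0$. Since (iii) trivially entails (ii) --- a CSS enjoying the extra entropy property is in particular a CSS --- it suffices to prove (ii)$\Rightarrow$(i) and (i)$\Rightarrow$(iii).

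First I would dispatch (ii)$\Rightarrow$(i). If $C$ is a CSS containing $\rho$ then $\Lambda\circ\tau(C)=C$ by definition; but $\Lambda$ always takes values in $Conv(\mathcal{C}_1\widetilde{\otimes}\mathcal{C}_2)=\mathcal{S}(\mathcal{H})$, so $C=\Lambda\circ\tau(C)\subseteq\mathcal{S}(\mathcal{H})$ and hence $\rho$ is separable. This is exactly the mechanism already exploited in the last paragraph of the proof of Proposition \ref{subirbajar}. The remaining classical implication (i)$\Rightarrow$(ii) is subsumed by (i)$\Rightarrow$(iii) and need not be argued separately; it is in any case the content of the boxed criterion, whose polytope $M$ is literally a CSS by construction.

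The substantive step is (i)$\Rightarrow$(iii). I would invoke Caratheodory's theorem --- legitimate because we work in finite dimension --- to write a separable $\rho$ as a \emph{finite} convex combination $\rho=\sum_{k}\lambda_k\,|\varphi_k\rangle\langle\varphi_k|\otimes|\phi_k\rangle\langle\phi_k|$ of pure product states. Then I would reuse, verbatim, the construction in the proof of Proposition \ref{subirbajar}: put $C_1=Conv(\{|\varphi_k\rangle\langle\varphi_k|\}_k)$, $C_2=Conv(\{|\phi_k\rangle\langle\phi_k|\}_k)$, and $P_{pure}:=\Lambda(C_1,C_2)=Conv(C_1\widetilde{\otimes}C_2)$, the polytope of all $\sum_{i,j}\mu_{ij}\,|\varphi_i\rangle\langle\varphi_i|\otimes|\phi_j\rangle\langle\phi_j|$. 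The marginal computation of equations (\ref{e:cuenta})--(\ref{e:equality}), specialized to these pure $\rho^{1}_i,\rho^{2}_j$, gives $\tau_1(P_{pure})=C_1$ and $\tau_2(P_{pure})=C_2$, whence $\Lambda\circ\tau(P_{pure})=\Lambda(C_1,C_2)=P_{pure}$: so $P_{pure}$ is a CSS, and $\rho\in P_{pure}$ because each diagonal term $|\varphi_k\rangle\langle\varphi_k|\otimes|\phi_k\rangle\langle\phi_k|$ lies in $C_1\widetilde{\otimes}C_2\subseteq P_{pure}$. Finally, the vertices of $P_{pure}$ are pure states, on which $S_{vN}=0$; since $S_{vN}\ge0$ everywhere on $\mathcal{C}$ and $P_{pure}$ is compact (a polytope) so the infimum is attained, I conclude $\min\{S_{vN}(\sigma)\mid\sigma\in P_{pure}\}=0$, which is (iii). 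This closes the cycle and proves all three conditions equivalent.

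The only place demanding genuine care is verifying the fixed-point identity $\Lambda\circ\tau(P_{pure})=P_{pure}$ for this particular polytope --- i.e. checking that partial tracing a generic element of $P_{pure}$ returns marginals in $C_1$ and $C_2$ and that re-tensoring and taking the convex hull recovers $P_{pure}$ exactly. But this is the same bookkeeping as in Proposition \ref{subirbajar}, now with the extreme points taken to be rank-one projectors; I expect no new difficulty beyond transcribing that argument. Everything else --- non-negativity of $S_{vN}$, attainment of the minimum on a compact set, and the implication (iii)$\Rightarrow$(ii) --- is immediate.
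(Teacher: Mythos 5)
Your proposal is correct and follows essentially the same route as the paper: Caratheodory's finite decomposition into pure product states, the invariant polytope construction recycled from Proposition \ref{subirbajar} to produce $P_{pure}$, and the vanishing of $S_{vN}$ on its pure-product vertices. If anything, your cycle of implications is tidier than the paper's informal discussion, since you explicitly close the loop with (ii)$\Rightarrow$(i) via $C=\Lambda\circ\tau(C)\subseteq\mathcal{S}(\mathcal{H})$ for an arbitrary (not necessarily polytopal) CSS, a point the paper leaves implicit in the last paragraph of the proof of Proposition \ref{subirbajar}.
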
}}
\end{widetext}

\noindent The analogy with the pure case is not only  clear and
suggestive. It may also  provide some geometric flavor to the
separability problem. It is indeed a generalization which includes
the pure case as a special one. Interestingly enough, as shown
in section \ref{s:New Criteria}, it is strongly linked to the
$S-M$ measure.

\subsection{Final conclusions}\label{s:Conclusions}

\nd We have advanced here an abstract criterium of separability and
showed that it is closely  connected to the  extant entanglement
measures. We ascertained also that the function $\Lambda\circ\tau$
is a generalization of the map $\rho\mapsto \rho^{A}\otimes\rho^{B}$
to convex subsets of $\mathcal{C}$. Indeed, we showed that
$\Lambda\circ\tau$ generalizes  to convex sets properties of
invariant product states of the map $\rho^{A}\otimes\rho^{B}$.
\vskip 3mm \nd Denoting by ``CSS" the invariant subsets of
$\mathcal{C}$, a procedure was delineated that generalizes product
states to more general convex sets. This could be useful for the
study of new separability criteria based on more general convex
subsets of $\mathcal{C}$ and  disposes of the obligation of
concentrating attention just on points (density matrixes). By
itself, the criterium \ref{p:our criteria} also sheds some light
into aspects of the geometric properties of separable states.


\appendix
\section{Notations for basic math concepts used in the text}

\begin{enumerate}

\item A function is surjective (onto) if every possible image is mapped
to by at least one argument. In other words, every element in the
codomain has non-empty preimage. Equivalently, a function is
surjective if its image is equal to its codomain. A surjective
function is a surjection.

\item Let $S$ be a vector space over the real numbers, or, more generally, some ordered field.
A set $C$ in $S$ is said to be convex if, for all $x$ and $y$ in $C$
and all $t$ in the interval $[0,1]$, the point $(1 - − t ) x + t
y$ is in $C$. That is, every point on the line segment connecting
$x$ and $y$ belongs to $C$. This entails that any convex set in a
real or complex topological vector space is path-connected.

\item Every subset $Q$ of a vector space is contained within a
smallest convex set (called the convex hull of $Q$), namely the
intersection of all convex sets containing $Q$,

\item  Suppose that $K$ is a field (for example, the real numbers)
and $V$ is a vector space over $K$. If $v_1,\ldots,v_n$ are vectors
and $a_1,\ldots,a_n$ are scalars, then the linear combination of
those vectors with those scalars as coefficients is, of course,
$\sum_{i=1}^n\,a_i\,v_i$.  By restricting the coefficients used in
linear combinations, one can define the related concepts of affine
combination, conical combination, and convex combination, together
with the associated notions of sets closed under these operations.
If $\sum_{i=1}^n\,a_i=1,$ we have an affine combination, its span
being an affine subspace while  the model space is an hyperplane. If
$a_i \ge 0,$ we have instead a conical combination, a convex cone
and a quadrant, respectively. Finally, if $a_i \ge 0$ plus
$\sum_{i=1}^n a_i=1$  we have now a convex combination, a convex set
and a simplex, respectively.

\end{enumerate}

\vskip1truecm

\noindent {\bf Acknowledgements} \noindent This work was partially
supported by the following grants: .

\end{document}